\newif\ifmp
\newcommand{\R}{\mathbb R}
\newcommand{\Z}{\mathbb Z}
\newcommand{\F}{\mathcal{F}}
\newcommand{\K}{\mathcal{K}}
\newcommand{\qu}{\mathcal{Q}}
\newcommand{\smallc}{\textcolor{black}{$\F$-small complexity}\xspace}
\newcommand{\smallcc}[1]{\textcolor{black}{{#1}-small complexity}\xspace}
\newcommand{\conv}{\mathop{\mathrm{conv}}}
\newcommand{\cone}{\mathop{\mathrm{cone}}}
\DeclareMathOperator{\intcone}{int.cone}
\DeclareMathOperator{\rec}{rec}
\DeclareMathOperator{\C}{\mathcal{C}}
\DeclareMathOperator{\h}{\mathcal{H}}
\DeclareMathOperator{\argmin}{argmin}
	\newtheorem{claim}{Claim}
	\newtheorem{proposition}{Proposition}
	\newtheorem{lemma}[proposition]{Lemma}
	\newtheorem{theorem}[proposition]{Theorem}
	\newtheorem{corollary}[proposition]{Corollary}
\newcounter{mynotes}
\newcommand{\remove}[1]{}
\newenvironment{cpf}{\begin{trivlist} \item[] {\em Proof of claim.}}{
$\diamond$
                       \end{trivlist}}
\begin{document}

\ifmp
	\title{Mixed-integer Quadratic Programming is in NP}
	%\subtitle{Do you have a subtitle?\\ If so, write it here}
	%\titlerunning{Short form of title} % if too long for running head
	
	\author{Alberto Del Pia \and Santanu S. Dey \and Marco Molinaro}
	%\authorrunning{Short form of author list} 

	\institute{Alberto Del Pia \at IBM T. J. Watson Research Center\\
	           \email{alberto.delpia@gmail.com}
						 \and
						 Santanu S. Dey \at School of Industrial and Systems Engineering, Georgia Institute of Technology\\
             \email{santanu.dey@isye.gatech.edu}           %  \\
             \and
             Marco Molinaro \at School of Industrial and Systems Engineering, Georgia Institute of Technology\\
             \email{molinaro@isye.gatech.edu}
}

	\date{Received: date / Accepted: date}
	% The correct dates will be entered by the editor

\else
	\title{Mixed-integer Quadratic Programming is in NP}
	\author{Alberto Del Pia, Santanu S. Dey, Marco Molinaro}
\fi

\maketitle

\begin{abstract}
\emph{Mixed-integer quadratic programming (MIQP)} is the problem of optimizing a quadratic function over points in a polyhedral set where some of the components are restricted to be integral. In this paper, we prove that the decision version of mixed-integer quadratic programming is in NP, thereby showing that it is NP-complete. This is established by showing that if the decision version of mixed-integer quadratic programming is feasible, then there exists a solution of polynomial size. This result generalizes and unifies classical results that quadratic programming is in NP~\cite{Vavasis90} and integer linear programming is in NP~\cite{BoroshTreybig1976,GathenSieveking1978,KannanMonma1978,papadimitriou1981}.

\ifmp
	\keywords{Quadratic Programming \and Integer Programming \and Complexity}
	% \PACS{PACS code1 \and PACS code2 \and more}
	% \subclass{MSC code1 \and MSC code2 \and more}
\fi

\end{abstract}

\section{Introduction}
\label{sec: intro}
\emph{Mixed-integer quadratic programming (MIQP)} is the problem of optimizing a quadratic function over points in a polyhedral set that have some components integer, and others continuous. More formally, a MIQP problem is an optimization problem of the form:
\begin{eqnarray}\label{MIQP}
\begin{array}{llr}
&\textup{min} &x^\top Hx + c^\top x \\
&\textup{s.t.}& Ax \leq b \\
&&x \in \Z^p \times \R^{n-p},
\end{array}
\end{eqnarray}
where $H \in \mathbb{Q}^{n\times n}$ and is symmetric, $c \in \mathbb{Q}^n$, $A \in \mathbb{Q}^{m \times n}$ and $b \in \mathbb{Q}^m$.
The decision version of this problem is:
Does there exist a feasible solution to $\F(H,c,d,A,b)$ where $\F(H,c,d,A,b)$ is the set of $x$ satisfying
\begin{eqnarray}\label{MIQPd}
\begin{array}{llr}
&&x^\top Hx + c^\top x + d \leq  0\\
&& x \in \C:= \{ x : Ax  \leq  b\} \\
&& x \in \Z^p \times \R^{n-p}.
\end{array}
\end{eqnarray}
The special case of MIQP when all variables are required to be integer ($p=n$) is called \emph{integer quadratic programming (IQP)}. It is well known that IQP is NP-hard. 
%It has apparently never been proved that the decision versions of IQP and MIQP lie in NP.
%The purpose of this paper is to provide that proof.
We show that the decision versions of IQP and MIQP lie in NP. Therefore, decision version  of IQP and MIQP are NP-complete. This result generalizes and unifies classical results that \emph{quadratic programming} is in NP~\cite{Vavasis90} and \emph{integer linear programming} is in NP~\cite{BoroshTreybig1976,GathenSieveking1978,KannanMonma1978,papadimitriou1981}.

Recently, Del~Pia and Weismantel~\cite{DelPiaW14} showed that IQP can be solved in polynomial time when $n = 2$. It is a major open question whether IQP can be solved in polynomial-time for fixed dimension.

%\bigskip
\subsection{Statement of result and discussion}
Given a rational vector/matrix, its \emph{complexity} is the bit-size of its smallest binary encoding.
The complexity of a rational polyhedron $P \subseteq \R^n$ is the smallest complexity of a matrix $[A~b]$ such that $P = \{x : Ax \le b\}$.
%The complexity of $\F(H,c,d,A,b)$ is the complexity of the set of vectors/matrices $\{H, c, d, A, b\}$. 
The complexity of a finite set of objects is the sum over the complexity of the constituents objects.
We will prove the following result.
\begin{theorem}\label{thm:npresult}
Let $n,p \in \mathbb{Z}_{++}$. Let $H \in \mathbb{Q}^{n \times n}$, $c \in \mathbb{Q}^n$, $d \in \mathbb{Q}$, $A \in \mathbb{Q}^{m \times n}$, $b \in \mathbb{Q}^m$, and let $\phi$ be the complexity of $\{H, c, d, A, b\}$. If $\F(H,c,d,A,b)$ is non-empty, then there exists $x^{0}\in \F(H,c,d,A,b)$ such that the complexity of $x^0$ is bounded from above by $f(\phi)$ where $f$ is a polynomial function.
\end{theorem}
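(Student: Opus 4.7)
The approach I would take is to decouple the integer part of a sought feasible point from the continuous part, and handle each with the tools behind the two classical results that the theorem unifies. Concretely, suppose one can exhibit $y^0 \in \Z^p$ of complexity polynomial in $\phi$ lying in
\[
Y^* := \{ y \in \Z^p : \exists\, z \in \R^{n-p},\ (y,z) \in \F(H,c,d,A,b) \}.
\]
Then substituting $y = y^0$ into the constraints of $\F(H,c,d,A,b)$ leaves a purely continuous quadratic feasibility problem in $z \in \R^{n-p}$ whose input complexity is polynomial in $\phi$. Applying Vavasis's theorem \cite{Vavasis90} that QP feasibility lies in NP then yields $z^0$ of polynomial complexity, so that $x^0=(y^0,z^0)$ is the desired witness. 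The whole task therefore reduces to producing a small-complexity $y^0 \in Y^*$, under the assumption that $\F(H,c,d,A,b)$ is non-empty.

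For this core step, I would analyze the parametric value function
\[
v(y) := \inf_{z \in \R^{n-p}} \{\, z^\top H_{zz} z + 2 y^\top H_{yz} z + c_z^\top z + y^\top H_{yy} y + c_y^\top y + d\ :\ A_1 y + A_2 z \leq b \,\},
\]
so that $Y^* = \{y \in \Z^p : v(y) \leq 0\}$. The plan is to show that $\R^p$ decomposes into polynomially many ``cells'', each cut out by rational linear inequalities of complexity polynomial in $\phi$, corresponding to a fixed active set in the KKT system of the parametric QP in $z$. On each cell $v(y)$ would take a closed algebraic form in $y$ of polynomial complexity, so that $Y^*$ intersected with the cell is described by a small mixed-integer system. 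After reducing away the remaining quadratic ingredient on each cell (e.g. by introducing an auxiliary continuous variable standing for the quadratic term), an invocation of the ILP-in-NP results \cite{BoroshTreybig1976,GathenSieveking1978,KannanMonma1978,papadimitriou1981} on the non-empty cells would deliver a small integer point, namely $y^0$.

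The main obstacle is the circularity that threatens the last step: if on some cell $v(y)$ retains a genuine quadratic part, then the search for integer $y$ with $v(y) \leq 0$ is itself an integer quadratic feasibility problem, which is precisely what we are trying to put in NP. Overcoming this requires a structural lemma specific to the form of $v(y)$ on each cell, showing either that its quadratic part can be absorbed into extra continuous variables handled by Vavasis's theorem, or that $Y^*$ on each cell admits a polyhedral description of polynomial complexity whose integer solutions coincide with the original ones. Identifying and proving such a structural lemma -- while being careful that the cellular decomposition does not introduce irrational algebraic numbers or a super-polynomial number of cells -- is, I expect, the central technical contribution; once it is in place, it meshes cleanly with the classical ILP-in-NP and QP-in-NP arguments to yield the polynomial bound $f(\phi)$ asserted in Theorem~\ref{thm:npresult}.
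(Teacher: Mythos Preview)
Your proposal is not a proof; it is an outline that correctly isolates the hard step and then stops. The gap is exactly the one you name: on each active-set cell the value function $v(y)$ is, in general, a genuine quadratic in $y$, so the condition $v(y)\le 0$ on that cell is an integer quadratic feasibility problem of the same shape as the original. Neither of your suggested escapes actually works without further ideas. Introducing a continuous auxiliary $w$ for the quadratic term requires the constraint $w = y^\top Q y$ (or at least $w \ge y^\top Q y$ together with $w \le 0$), which is itself quadratic and not handled by the ILP-in-NP results. And there is no reason to expect that $\{y : v(y)\le 0\}$ intersected with a cell is a rational polyhedron of polynomial complexity; for indefinite $H$ this set is typically a non-polyhedral quadric region. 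There are also subsidiary issues you do not address: the inner QP in $z$ is in general non-convex, so active-set/KKT cells need not correspond to global minima of the inner problem; $v(y)$ may equal $-\infty$ on an unbounded set of $y$'s; and the claim of ``polynomially many cells'' is not justified (and is not needed for NP membership, but you do use it).

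The paper's proof takes a completely different route and never forms the parametric value function. It works geometrically with the recession cone of $\C$: if $r^\top H r<0$ for some recession direction, one moves far along an integral scaling of such a direction from any mixed-integer point of $\C$. Otherwise $r^\top H r\ge 0$ on the whole recession cone, and the paper decomposes $\C\cap(\Z^p\times\R^q)$ as a finite union of sets $P_i+\intcone(R_K)$ with $\cone(R_K)$ simple (Proposition~\ref{IPNP2}), then refines each simple cone via Lemma~\ref{decomposition} so that on every face either $\min\{x^\top H x\}$ is strictly positive or some extreme ray $r$ has $r^\top H r=0$. Along each such ``null'' ray the quadratic is \emph{linear}, so either one can drive $\qu$ to $-\infty$ along it (yielding a small witness), or all solutions are bounded in norm by a value of polynomial complexity (Claim~2), after which Vavasis finishes the continuous part. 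The key technical idea you are missing is precisely this cone-refinement lemma, which replaces your hoped-for ``structural lemma'' and sidesteps the circularity entirely.
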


Theorem~\ref{thm:npresult} directly implies the following.
\begin{corollary}\label{thm:nph}
The decision versions of IQP and MIQP are NP-complete.
\end{corollary}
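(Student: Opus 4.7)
The plan is straightforward: combine Theorem~\ref{thm:npresult} to obtain membership in NP with the classical NP-hardness of IQP. Since the theorem already does the heavy lifting by producing a polynomial-size feasibility witness, Corollary~\ref{thm:nph} is essentially a bookkeeping exercise.

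First, for NP-membership, I would describe a non-deterministic verifier on input $(H,c,d,A,b,p)$ of complexity $\phi$. The verifier guesses a rational vector $x^0 \in \Q^n$ and checks three conditions: (i) $Ax^0 \leq b$, (ii) the first $p$ components of $x^0$ are integers, and (iii) $(x^0)^\top H x^0 + c^\top x^0 + d \leq 0$. Completeness is precisely the content of Theorem~\ref{thm:npresult}: if the instance is a YES-instance, then some $x^0 \in \F(H,c,d,A,b)$ has complexity at most $f(\phi)$, polynomial in $\phi$, so the guess is of polynomial length. Soundness is immediate from the definition of $\F(H,c,d,A,b)$. Each of the three checks runs in polynomial time in $\phi$ because the bit-size of $x^0$ is polynomial in $\phi$, so the matrix-vector product $Ax^0$, the quadratic-form evaluation $(x^0)^\top H x^0$, and the rational integrality tests are all routine polynomial-time rational arithmetic.

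Second, for NP-hardness, I would invoke the fact cited in the paper that IQP is already NP-hard. Since IQP is precisely the special case $p = n$ of MIQP, NP-hardness transfers to MIQP by the identity reduction. For a fully self-contained argument, one can reduce ILP feasibility (which is NP-complete) to MIQP by taking $H = 0$: the quadratic inequality then collapses to the linear inequality $c^\top x + d \leq 0$, and we recover a standard ILP feasibility instance, which proves both IQP and MIQP are NP-hard.

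Combining the two, IQP and MIQP lie in NP and are NP-hard, hence NP-complete. I do not anticipate a significant obstacle here, as the only subtlety is verifying that the polynomial bound $f(\phi)$ supplied by Theorem~\ref{thm:npresult} really does give a witness of polynomial bit-size, so that the verifier's quadratic-form evaluation in step (iii) genuinely runs in polynomial time; since $f$ is polynomial by hypothesis, this is automatic. The entire technical weight of the result lives in the theorem, not the corollary.
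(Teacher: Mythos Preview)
Your proof is correct and follows essentially the same approach as the paper: membership in NP via the polynomial-size witness guaranteed by Theorem~\ref{thm:npresult}, and NP-hardness by appealing to a known NP-hard special case. The only cosmetic difference is that the paper exhibits the MAX-CUT reduction to IQP explicitly, whereas you either cite the stated NP-hardness of IQP or reduce from ILP feasibility by setting $H=0$; all three choices are standard and equally valid.
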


\begin{proof}
Given a graph $G=(V,E)$ and an integer $k$, determining whether there is a cut of cardinality at least $k$ in $G$ is NP-complete~\cite{GarJohSto76,Kar72}.
It is well known that such problem can be written as the decision IQP problem
\begin{eqnarray*}\label{maxcut}
\begin{array}{llr}
&& \sum_{v_iv_j \in E} (x_i+x_j-2x_ix_j) \ge k \\
&& x_i \in \{0,1\}^n \qquad \forall v_i \in V.
\end{array}
\end{eqnarray*}
Hence any problem in NP can be polynomially transformed to a decision IQP.
Theorem~\ref{thm:npresult} proves that there is a polynomial-length certificate for yes-instances of decision MIQP, showing that decision IQP and MIQP are in NP. \hfill $\square$
\end{proof}

%We end this section with a discussion on the complexity status of a general version of decision IQP when one varies not only the number of variables, but also the number of quadratic inequalities. The result of Theorem \ref{thm:npresult}  is in contrast with several well-known negative results:
We end this section by contrasting the result of Theorem \ref{thm:npresult} with several well-known negative results when one considers a more general version of decision IQP by varing the number of quadratic inequalities.
\begin{enumerate}
\item `Many' general quadratic inequalities: 
%Theorem \ref{thm:npresult} implies that decision IQP is decidable. On the other hand, 
By using a simple reduction from the problem of determining the feasibility of a quartic equation in 58 non-negative integer variables, we obtain that determing the feasibility of a system with $2\left({58 \choose 2} + 58 + 1 \right) $ quadratic inequalities and $58$ linear inequalities in $\left({58 \choose 2} + 58 \right) $ continuous variables and 58 integer variables is undecidable (see Theorem 3.2 and Theorem 3.3(i) in~\cite{Koeppe2012}). Therefore already with 3424 quadratic inequalities, 58 linear inequalities, 58 integer variables, 1711 continuous variables, it is not possible to bound the size of smallest feasible solution.

\item Two general quadratic inequalities: In the presence of two quadratic inequalities (in fact one quadratic equation) in two variables, there exist examples (the so called \emph{Pellian} and \emph{anti-Pellian} equations) where the minimal binary encoding length of any feasible integral solution is exponential in the complexity of the instance~\cite{Lagarias80}. 

\item `Many' convex quadratic inequalities: Consider the following system of inequalities~\cite{Khachiyan1983}:
\begin{eqnarray}
x_1 & \geq &2 \\
x_{i} & \geq & x_{i-1}^2 \ \forall i\in\{2, \dots, n\} \\
x &\in& \mathbb{Z}^n.
\end{eqnarray}
It is clear that for this system, the minimal binary encoding length of any feasible integral solution is exponential in the complexity of the instance.
\end{enumerate}
%Is it an interesting open problem if one can prove a more general version of (\ref{MIQPd}) is in NP in the presence of more quadratic inequalities (of course, not by bounding the minimal binary encoding length feasible solutions and with lesser than $2\left({58 \choose 4} + {58 \choose 3} + {58 \choose 2} \right) $ quadratic inequalities).

The above examples serve to highlight the fact that the result of Theorem \ref{thm:npresult} (a feasible system of inequalities with exactly one quadratic inequality  always has at least one integer feasible solution of small size) is tight with respect to the number of quadratic inequalities.

The rest of the paper is organized as follows. Section 2 collects all notation and results that are needed to prove Theorem \ref{thm:npresult}. Section 3 presents a proof of Theorem \ref{thm:npresult}.

%Notation:
%Throughout this paper, we use the word ``size'' to denote the length of the usual binary encoding (see \cite{Schrijverbook} for more details), 
%##############################################################
%##############################################################
%##############################################################

\section{Preliminaries}

%\subsection{Complexity of objects}
\subsection{Notation}

Throughout this paper, we use 
$e^i$ to represent the $i$-th vector of the standard basis of $\R^n$,
$\textup{sign}(u)$ to represent the sign of a real number $u$, 
$\textup{dim}(S)$ to represent the affine dimension of $S$,
$\conv(S)$ to represent the convex hull of a set $S$,
$\textup{cone}(S)$ to represent the conic hull of a set $S$,
and $\textup{int.cone}(S)$ to represent the set $\sum_{r^j \in S}\lambda_j r^j, \lambda_j \in \mathbb{Z}_{+} \ \forall j$.

 Given an object $\mathcal{O}$ and another object $f(\mathcal{O})$ that is a function of it, we say that $f(\mathcal{O})$ has \emph{\smallcc{$\mathcal{O}$}} if the complexity of $f(\mathcal{O})$ is at most a polynomial function of the complexity of $\mathcal{O}$ (or more precisely, there is a polynomial $p$ such that for every input object $\mathcal{O}$, the complexity of $f(\mathcal{O})$ is at most $p(complexity(\mathcal{O}))$).

\subsection{Quadratic programming is in NP}

%A \emph{quadratic programming} (QP) problem is an optimization problem of the form
%$$\textup{min} \{ x^\top Hx + c^\top x : Ax \leq b, \, x \in \mathbb{R}^n\},$$
%where $H,c,A,b$ are defined like for MIQP.
%Therefore QP, and its decision version, are the special cases of the corresponding MIQP when $p=0$.
\emph{Quadratic programming (QP)} is the special case of MIQP when all variables are continuous ($p=0$).
%Consider the continuous relaxation of the set (\ref{MIQPd}). 
Vavasis~\cite{Vavasis90} proved that the decision version of QP is in NP.

\begin{theorem}\label{QPisNP}
The feasibility problem over the continuous relaxation of (\ref{MIQPd})
% \mmnote{Should we change \eqref{MIQPd} for \eqref{MIQP}?}
is in NP. Moreover, suppose that the continuous relaxation of (\ref{MIQP}) has a global optimal solution. Then there exists a system of rational linear equations of \smallcc{$\{H,c, d, A,b\}$} whose solution is one of the global optimal solutions. 
\end{theorem}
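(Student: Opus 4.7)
My plan is to treat the second assertion first (a small rational linear system defining some global optimum), and then to derive the first assertion from it together with the Frank--Wolfe theorem for QP.

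Assume a global optimum $x^*$ of $\min\{f(x) : x \in \C\}$ exists, where $f(x) := x^\top H x + c^\top x$ and $\C = \{x : Ax \le b\}$. I would consider the smallest face $F$ of $\C$ containing $x^*$, with active row set $I$ so that $F = \{x \in \C : A_I x = b_I\}$ and $x^* \in \textup{relint}(F)$. The key local argument: since $x^* \in \textup{relint}(F)$, for every direction $v$ in the linear space $L := \ker A_I$ one can slightly perturb $x^*$ by $\pm tv$ and stay in $F$; the second-order expansion of $f$ combined with $t \to 0^\pm$ yields (i) $(2Hx^* + c)^\top v = 0$ for all $v \in L$, and (ii) $v^\top H v \ge 0$ for all $v \in L$. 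Condition (i) says $2Hx^* + c$ belongs to the row span of $A_I$, so there is $\mu$ with $2Hx^* - A_I^\top \mu = -c$; together with $A_I x^* = b_I$, this exhibits $(x^*,\mu)$ as a solution of a rational linear system in $(x,\mu)$ whose entries are drawn from $H, A, b, c$, hence of complexity polynomial in $\phi$.

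To finish the second assertion, I would use (ii): it says $H|_L$ is positive semidefinite, so $f$ is convex on the affine slice $x^* + L$; consequently every point in $(x^*+L) \cap F$ with objective value $f(x^*)$ is itself a global optimum. Picking a rational basic solution of my linear system lying in this slice yields a global optimum of polynomial complexity via standard Cramer-rule bounds. For the first assertion, set $\nu^* := \inf\{f(x) : x \in \C\}$. The Frank--Wolfe theorem gives two cases. If $\nu^*$ is attained, the structural result yields a small-complexity $x^*$, and feasibility of $\F(H,c,d,A,b)$ reduces to the single check $f(x^*) + d \le 0$. If $\nu^* = -\infty$, I would exhibit a pair $(x_0, r)$ with $x_0 \in \C$, $r \in \rec(\C)$, and either $r^\top H r < 0$ or $r^\top H r = 0$ with $(2Hx_0 + c)^\top r < 0$; such a certificate can be chosen of polynomial complexity by classical vertex/ray bounds, and a sufficiently long (polynomially-encodable) step along $r$ from $x_0$ produces a feasible point.

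The principal obstacle I anticipate is guaranteeing that the small-complexity rational solution extracted from the derived linear system is actually a global optimum, rather than a stationary point or a saddle. The semidefiniteness of $H|_L$ is exactly what is needed: it promotes first-order stationarity on $F$ into true global optimality throughout the affine slice $x^* + L$, so basifying the solution inside that slice preserves optimality while shrinking bit-complexity. A secondary, more routine issue is producing the polynomial-sized recession certificate in the unbounded case, which is handled by classical complexity bounds applied to $\C$ and to the cone of negative-curvature recession directions of $f$.
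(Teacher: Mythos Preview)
The paper does not prove this theorem; it is quoted from Vavasis and used as a black box throughout, so there is no in-paper argument to compare your attempt against.

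Your outline is essentially Vavasis's original argument and is sound in its main steps: pass to the minimal face $F$ of $\C$ containing an optimum $x^*$, derive the first-order condition $2Hx^*+c\in\textup{row}(A_I)$ and the second-order condition that $H$ is positive semidefinite on $L=\ker A_I$, and conclude that $f$ is convex on $\textup{aff}(F)=x^*+L$, so every stationary point of $f$ on that slice has value $f(x^*)$. One step needs tightening. A basic solution of the system $\{2Hx-A_I^\top\mu=-c,\ A_Ix=b_I\}$ indeed lies in $x^*+L$ and, by the convexity/stationarity argument, has objective value $f(x^*)$; but nothing forces it to satisfy the inactive inequalities $A_{\bar I}x\le b_{\bar I}$, so it may lie outside $\C$ and thus fail to be a global optimum of the constrained problem. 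The fix is standard: the $x$-projection $S$ of the solution set of your linear system is an affine subspace on which $f\equiv f(x^*)$; intersect $S$ with $\C$ to obtain a nonempty rational polyhedron of $\{H,A,b,c\}$-small complexity and take a vertex (after reducing to pointed $\C$, as the paper itself does in its main proof). The active equalities at that vertex furnish the desired small linear system.

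Your Frank--Wolfe dichotomy for the NP claim is the right scaffold. The unbounded branch is a bit less routine than you indicate: the set $\{r\in\rec(\C):r^\top Hr=0\}$ is generally not polyhedral, so one cannot simply optimize the linear form $(2Hx_0+c)^\top r$ over it to extract a small certificate. One way through is to first apply the second assertion to $\min\{r^\top Hr:r\in\rec(\C)\cap\h\}$; if the optimum is negative you are done, and if it is zero a face-by-face argument (in the spirit of the paper's Lemma~\ref{decomposition}) locates a small $r$ with $r^\top Hr=0$ together with a small $x_0\in\C$ making the linear term negative.
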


\subsection{Mixed-integer linear programming is in NP} \label{sec:IPNP}

We will need the following generalization of a classical result that can be used to prove that the decision version of mixed integer linear programming (MIP) is in NP.
%; since this result is crucial to the rest of the paper, we provide its proof.
%\begin{definition}[Simple cone]
We say that a pointed polyhedral cone $C \subseteq \mathbb{R}^n$ is a \emph{simple cone} if the number of extreme rays is equal to the the dimension of the cone. 
%\end{definition}

\begin{proposition}\label{IPNP2}
Let $P = \{x : Ax \le b\} \subseteq \mathbb{R}^{p+q}$ be a rational pointed polyhedron. Then there is a finite family $\{P_i\}_i$ of polytopes, and a finite family $\{R_K\}_{K \in \K}$ of subsets of extreme rays of $P$ with the following properties:

	\begin{enumerate}
		\item $P \cap (\Z^p \times \R^q) = \bigcup_{i, K \in \K} (P_i + \intcone(R_K))$;
		\item Each polytope $P_i$ and each vector in $R_K$ has \smallcc{$[A~b]$};
		\item For each $K \in \K$, all vectors in $R_K$ are integral;
		\item Each cone $\cone(R_K)$ is simple.
	\end{enumerate}
\end{proposition}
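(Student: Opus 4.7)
The plan is to decompose $P$ via its Minkowski structure, triangulate its recession cone into simplicial cones, and then carve out the mixed-integer points slice-by-slice within each resulting Minkowski summand.

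First, since $P$ is a rational pointed polyhedron, write $P = V + \rec(P)$ with $V = \conv(\mathrm{vert}(P))$ and $\rec(P) = \cone(\sR)$, where $\sR$ is the finite set of extreme rays. Standard size bounds on vertices and extreme rays of rational polyhedra give that $V$ and the vectors in $\sR$ have $[A~b]$-small complexity, and by scaling each $r \in \sR$ by a suitable positive integer we may assume they are all integral without inflating complexity. Next, triangulate $\cone(\sR)$ into a finite family of simplicial cones $\{\cone(R_K)\}_{K \in \K}$, with $R_K \subseteq \sR$ linearly independent and $|R_K| = \dim\cone(R_K)$; this is a classical construction (e.g.\ a pulling triangulation), and the $R_K$ inherit all the desired complexity and integrality properties, and each $\cone(R_K)$ is simple by construction.

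Second, for each $K$ let $\Pi_K = \{\sum_{r \in R_K} \lambda_r r : \lambda_r \in [0,1]\}$ be the closed fundamental parallelepiped. Using the uniqueness of representations in a simplicial cone (split each coefficient into its integer and fractional parts), every $y \in \cone(R_K)$ can be written as $y = \pi + w$ with $\pi \in \Pi_K$ and $w \in \intcone(R_K)$. Setting $Q_K := V + \Pi_K$, which is a polytope of small complexity, one obtains
\[
P \;=\; \bigcup_{K \in \K} \bigl(Q_K + \intcone(R_K)\bigr).
\]

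Third, because every $r \in R_K$ is integral, every vector in $\intcone(R_K)$ lies in $\Z^{p+q}$; consequently
\[
\bigl(Q_K + \intcone(R_K)\bigr) \cap (\Z^p \times \R^q) \;=\; \bigl(Q_K \cap (\Z^p \times \R^q)\bigr) + \intcone(R_K).
\]
For each $z \in \Z^p$, define $P_{K,z} := \{x \in Q_K : x_i = z_i,\ i=1,\ldots,p\}$; this is a polytope, empty for all but finitely many $z$ (namely those integer points lying in the projection of $Q_K$ onto its first $p$ coordinates), and each nonempty $P_{K,z}$ has small complexity since the coordinates of $z$ are integers bounded by the coordinatewise radius of $Q_K$, whose complexity is polynomial in that of $[A~b]$. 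Taking the family $\{P_{K,z}\}$ over all $K$ and all relevant $z$, together with the $\{R_K\}_{K \in \K}$, satisfies all four conditions.

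The main obstacle is the last step: one must verify that the slicing really recovers every mixed-integer point of $Q_K + \intcone(R_K)$, not just a subset. This hinges on having made $R_K$ integral, which forces the $Q_K$-part of any mixed-integer decomposition to itself be mixed-integer, so that the naive slicing $\bigcup_z P_{K,z}$ is exact; once this is in place the remainder is routine complexity bookkeeping.
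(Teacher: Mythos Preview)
Your proposal is correct and follows essentially the same approach as the paper: Minkowski-decompose $P$, scale extreme rays to be integral, pass to simplicial subcones of the recession cone, use the fundamental parallelepiped to obtain bounded pieces $Q_K = V + \Pi_K$, exploit integrality of the rays to push the mixed-integer constraint onto $Q_K$, and finally slice $Q_K$ into fibers over integral values of the first $p$ coordinates. The only cosmetic difference is that you triangulate the recession cone, whereas the paper simply takes \emph{all} index sets $K$ for which $\cone(R_K)$ is simple; both choices work and yield the same conclusion.
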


\begin{proof}
	Assume $P$ is non-empty, otherwise there is nothing to prove. By standard polyhedral theory, there is a set of vectors $\{v^1, \ldots, v^\ell\}$ (the vertices of $P$) and a set of \emph{integral} vectors $\{r^1, \ldots, r^m\}$ (a scaling of the the extreme rays of $P$) such that $P = \conv\{v^1, \ldots, v^\ell\} + \cone\{r^1, \ldots, r^m\}$ and the $v^i$'s and $r^j$'s have \smallcc{$[A~b]$} (see for example Chapters 7 and 10 in~\cite{Schrijverbook}).

For a subset $K \subseteq \{1, \ldots, m\}$, let $R_K = \{r^j : j \in K\}$ be the set of extreme rays indexed by $K$. Let $\K$ be the set of $K$'s such that the cone $\cone(R_K)$ is simple. Finally consider the set $B =\bigcup_{k \in \K} B^K$ where 
	\begin{align}
		 B^K = \bigg\{x \in \Z^p \times \R^q : x = v + \sum_{j \in K} \mu_j r^j, ~v \in \conv\{v^1, \ldots, v^\ell\}, ~\mu_j \in [0,1] ~\forall j \bigg\}.
	\end{align}
	Notice that $B$ is a union of the polytopes given by the fibers $\{\bar{y}\} \times \{z \in \R^q: (\bar{y}, z) \in B^K\}$ ranging over all $K \in \K$ and $\bar{y} \in B^K|_p$, where $B^K|_p$ is the projection of $B^K$ to the first $p$ coordinates. Using the fact that $v^i$'s and $r^j$'s have \smallcc{$[A~b]$} and $|K| \leq p + q$, we get that all points in $B^K|_p$ have \smallcc{$[A~b]$}. Hence each of these fibers also has \smallcc{$[A~b]$} since it is the intersection of two \smallcc{$[A~b]$} polyhedron $\big\{x \in \R^p \times \R^q : x = v + \sum_{j \in K} \mu_j r^j, ~v \in \conv\{v^1, \ldots, v^\ell\}, ~\mu_j \in [0,1] ~\forall j \big\}$ and $\{ x \in \R^p \times \R^q : x|_p = \bar{y}\}$. Let $\{P_i\}_i$ be this collection of fibers. 

%Also, for a subset $K \subseteq \{1, \ldots, m\}$, let $R_K = \{r^j : j \in K\}$ be the set of extreme rays indexed by $K$; finally let $\K$ be the set of $K$'s such that the cone $\cone(R_K)$ is simple.
	
	By construction, properties 2, 3 and 4 of the proposition are satisfied, so it suffices to show property 1. By using distributivity of union and Minkowski sums, notice $\bigcup_{i, K \in \K} (P_i + \intcone(R_K)) = B + \bigcup_{K \in \K} \intcone(R_K)$.
	
	To show $P \cap (\Z^p \times \R^q) \subseteq B + \bigcup_{K \in \K} \intcone(R_K)$, take a point $x \in P \cap (\Z^p \times \R^q)$. We can write it as $x = v + r$ for $v \in \conv(v^i)_i$ and $r \in \cone(r^j)_j$. Using Carath\'eodory's Theorem, we see that there exists $K \in \K$ such that the simple cone $\cone(R_K)$ contains $r$. Then consider multipliers $\mu_j \in \R_+$ for $j \in K$ such that $r = \sum_{j \in K} \mu_j r^j$. Breaking up the multipliers into their fractional and integer parts, we get that $$x = v + \sum_{j \in K} (\mu_j - \lfloor \mu_j \rfloor) r^j + \sum_{j \in K} \lfloor \mu_j \rfloor r^j.$$ Clearly the last term belongs to $\intcone(R_K)$. Moreover, this term is integer (since the $r^j$'s are integer) and $x \in \Z^p \times \R^q$, thus the remaining part $v + \sum_{j \in K} (\mu_j - \lfloor \mu_j \rfloor) r^j = x - \sum_{j \in K} \lfloor \mu_j \rfloor r^j$ belongs to $\Z^p \times \R^q$ and hence to $B$. Thus $x \in B + \intcone(R_K)$, concluding this part of the proof. 
	
	We now show the reverse direction $P \cap (\Z^p \times \R^q) \supseteq B + \bigcup_{K \in \K} \intcone(R_K)$. It is easy to see that $P \supseteq B + \bigcup_{K \in \K} \intcone(R_K)$, since $P = \conv(v^i)_i + \cone(r^j)_j$ and $B \subseteq \conv(v^i)_i + \cone(r^j)_j$ and $\intcone(R_K) \subseteq \cone(r^j)_j$. Also, $B \subseteq \Z^p \times \R^q$ and $\intcone(R_K) \subseteq \Z^{p + q}$ (again since the $r^j$'s are integral), and hence $B + \bigcup_{K \in \K} \intcone(R_K) \subseteq \Z^p \times \R^q$. This concludes the proof.  \hfill $\square$
\end{proof}

	One way of interpreting this decomposition is the following: Notice that each set $\intcone(R_K)$, for $K \in \K$, is linearly isomorphic to $\Z_+^{|K|}$; this proposition then asserts that we can decompose any mixed-integer linear set into (overlapping) sets that are affinely isomorphic to some $\Z_+^{n'}$. Resorting to the product structure of $\Z_+^{n'}$ will be instrumental in our main proof.

	Also, notice that this proposition proves that the decision version of MIP is in NP: 
\begin{proposition} \label{MIP in NP}
The decision version of MIP is in NP.
\end{proposition}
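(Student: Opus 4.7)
The plan is to produce, for every feasible MIP instance, a polynomial-size certificate in the form of a single mixed-integer feasible point; verification is then immediate, namely check $Ax \le b$ and that the first $p$ coordinates are integers. The heavy lifting has already been done in Proposition~\ref{IPNP2}, which exhibits $P\cap(\Z^p\times\R^q)$ as a union of pieces $P_i+\intcone(R_K)$ with each $P_i$ of \smallcc{$[A~b]$}, so the natural candidate certificate is a vertex of some non-empty $P_i$.

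The first step is to reduce to the case where $P=\{x:Ax\le b\}$ is pointed, which is the hypothesis of Proposition~\ref{IPNP2}. I would use the standard variable-splitting trick: replace each variable $x_i$ by $x_i^+-x_i^-$ with $x_i^\pm\ge 0$, and for $i\le p$ additionally require both parts to be integer. The modified system is feasible over $\Z^{2p}\times\R^{2(n-p)}$ iff the original is feasible over $\Z^p\times\R^{n-p}$, its polyhedron lies in the non-negative orthant (hence is pointed), and its defining data has complexity polynomial in that of $[A~b]$. A small certificate for the modified instance then yields one for the original by taking differences.

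Having reduced to the pointed case, I would apply Proposition~\ref{IPNP2}. If $P\cap(\Z^p\times\R^q)$ is non-empty, then, since $0\in\intcone(R_K)$ for every $K\in\K$, the decomposition forces some $P_i$ to be non-empty. By the proposition, $P_i$ is a polytope of \smallcc{$[A~b]$}; by construction in the proof of Proposition~\ref{IPNP2}, $P_i\subseteq B\subseteq P$ and $P_i\subseteq\Z^p\times\R^q$. Taking any vertex $x^\star$ of $P_i$ yields a rational point whose complexity is polynomial in that of the defining system of $P_i$, hence polynomial in the input, and this $x^\star\in P\cap(\Z^p\times\R^q)$ is the desired certificate.

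There is essentially no deep obstacle here. The only care required is checking that the pointedness reduction preserves mixed-integer feasibility with only a polynomial blow-up in complexity, and that vertices of the small-complexity polytope $P_i$ automatically land in $P\cap(\Z^p\times\R^q)$; both are direct from the constructions, so the proof is really just an unpacking of Proposition~\ref{IPNP2}.
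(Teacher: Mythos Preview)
Your proposal is correct and follows the same route as the paper: invoke Proposition~\ref{IPNP2} and exhibit a vertex of some non-empty $P_i$ as the \smallcc{$[A~b]$} certificate. You are slightly more careful than the paper's one-line argument in that you spell out a pointedness reduction (via variable splitting $x=x^+-x^-$, whereas the paper later handles pointedness by the $2^n$-orthant partition), but the core idea is identical.
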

Indeed, if the decision version of a MIP is true, then one can present a vertex of one of the polytopes $P_i$ as a certificate.

\section{Proof of Theorem \ref{thm:npresult}}

We begin this section with some technical lemmas.

\begin{lemma}[Normalizing hyperplane]
\label{lem:hyperplane}
Let $C = \textup{cone}\{r^1, \dots, r^s\}\subseteq \mathbb{R}^n$ be a pointed cone.
Then there exists a hyperplane $\mathcal{H} = \{ x : f^\top x = 1\}$ such that:
\begin{enumerate}
\item The complexity of $f$ is polynomially bounded by the maximum complexity of $r^i$, for $i=1,\dots,s$;
\item If $x \in C$ and $|\!|x |\!| = 1$, then $f^\top x \geq \frac{1}{R}$ where $R = \textup{max}_{i \in \{1, \dots, s\}}\{|\!|r^i|\!|\}$.
\item $C \cap \mathcal{H}$ is bounded.
%\anote{We should probably add: 3. $C \cap \mathcal{H}$ is bounded.}
\end{enumerate}
\end{lemma}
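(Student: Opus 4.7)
The plan is to construct $f$ so that $f^\top r^i \ge 1$ for every generator $r^i$; a single vector with this property will satisfy all three requirements.

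Existence of such an $f$ follows from pointedness of $C$: the dual cone $\{g \in \mathbb{R}^n : g^\top r^i \ge 0 \text{ for all } i\}$ has nonempty interior, so any strictly positive direction can be rescaled to have $g^\top r^i \ge 1$ for all $i$. To obtain an $f$ of polynomial complexity, I would restrict attention to the subspace $V = \mathrm{span}\{r^1,\ldots,r^s\}$, since shifts in $V^\perp$ are irrelevant to the constraints. Because the $r^i$'s span $V$, the polyhedron $Q := \{f \in V : f^\top r^i \ge 1,\ i = 1, \ldots, s\}$ has trivial lineality (if $g \in V$ is in the lineality then $g = \sum_j \alpha_j r^j$ and $g^\top g = \sum_j \alpha_j g^\top r^j = 0$), hence is pointed and has vertices. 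Standard LP results (Cramer's rule on at most $\dim V \le n$ tight constraints, cf.\ Schrijver Ch.~10) give a vertex $f$ of complexity polynomially bounded in the complexity of the $r^i$'s. This gives property~1.

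For property~2, take $x \in C$ with $\|x\| = 1$ and write $x = \sum_i \lambda_i r^i$ with $\lambda_i \ge 0$. The triangle inequality gives
\[
1 = \|x\| \le \sum_i \lambda_i \|r^i\| \le R \sum_i \lambda_i,
\]
so $\sum_i \lambda_i \ge 1/R$. Using $f^\top r^i \ge 1$ for all $i$,
\[
f^\top x \;=\; \sum_i \lambda_i \, (f^\top r^i) \;\ge\; \sum_i \lambda_i \;\ge\; \frac{1}{R}.
\]
For property~3, any $x \in C \cap \h$ similarly decomposes as $x = \sum_i \lambda_i r^i$ with $\lambda_i \ge 0$, and
\[
1 \;=\; f^\top x \;=\; \sum_i \lambda_i \, (f^\top r^i) \;\ge\; \sum_i \lambda_i,
\]
so $\|x\| \le \sum_i \lambda_i \|r^i\| \le R$; thus $C \cap \h$ is contained in the ball of radius $R$ and is bounded.

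The main obstacle is securing the polynomial complexity bound in property~1: the naive choice $\sum_i r^i/\|r^i\|$ is generally irrational, so the LP-vertex argument (after quotienting out $V^\perp$ to restore pointedness) is the essential technical step. Everything else reduces to the single inequality $f^\top r^i \ge 1$ and the triangle inequality.
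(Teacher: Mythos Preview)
Your proof is correct and follows the same strategy as the paper: take $f$ to be a vertex of the polyhedron $\{w : w^\top r^i \ge 1 \text{ for all } i\}$ after ensuring it is pointed---the paper does this by augmenting the generators with enough standard basis vectors $e^j$ to make the cone full-dimensional, whereas you restrict $f$ to $V = \mathrm{span}\{r^i\}$. Your triangle-inequality verification of property~2 is a slight streamlining of the paper's convex-hull normalization argument, but the content is the same.
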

\begin{proof}
Let $\{r^{s+1}, \dots, r^t\}$ be a minimal subset of $\{e^1,\dots,e^n\}$ with the property that the cone $C' = \cone\{r^1, \dots, r^s,r^{s+1}, \dots, r^t\}$ is full-dimensional. 
Clearly $|\{r^{s+1}, \dots, r^t\}|=n-d$, where $d$ is the dimension of $C$, and $C'$ is pointed.
Let ${f}$ be an extreme point of the following polyhedron: $\{w : w^\top r^i \geq 1 \ \forall i \in \{1, \dots, t\}\}$ (an extreme point exists since the rank of the matrix defining the polyhedron is $n$). Then $f$ satisfies the first and third criteria. Suppose $\hat{x} \in C$ and $\|\hat{x}\| = 1$. There exists $0 < \mu \leq 1$ such that $\mu\hat{x}$ belongs to the polytope defined as the convex hull of $\{\frac{r^i}{\| r^i\|} : i=1,\dots,s\}$ (since the maximum norm of any vector in this polytope is $1$). Thus there exist $\lambda_i, \ i=1,\dots,s$, with $\sum_{i=1}^s \lambda_i=1$ such that ${f}^\top \hat{x} \geq {f}^\top (\mu\hat{x}) = \sum_{i = 1}^s \lambda_i \frac{1}{|\!| r^i|\!|}{f}^\top r^i \geq \frac{1}{R}$.  \hfill $\square$
\end{proof}
Sometimes we will apply Lemma~\ref{lem:hyperplane} to a pointed cone $C$, without giving explicitly the set of rays $\{r^1, \dots, r^s\}$.
It is well known (see for example Theorem 10.2 in~\cite{Schrijverbook}) that facet and vertex complexity of a rational polyhedron are polynomially related.
Hence there exist vectors $r^1, \dots, r^s$, each of \smallcc{$C$}, such that $C = \textup{cone}\{r^1, \dots, r^s\}$.
Hence, in this case, Lemma~\ref{lem:hyperplane} implies that there exists a normalizing hyperplane $\mathcal{H} = \{ x : f^\top x = 1\}$ such that:
\begin{enumerate}
\item $f$ has \smallcc{$C$};
\item For every nonzero $x \in C$, there exists $\mu > 0$ such that $\mu x \in \mathcal{H}$.
\end{enumerate}

The following lemma outlines a crucial decomposition strategy for searching integer feasible points. 
\begin{lemma} \label{decomposition}
Let $C$ be a simple pointed cone such that $x^\top Hx \ge 0$ for every $x \in C$. Let $\mathcal{H}= \{x : f^{\top}x = 1\}$ be the normalizing hyperplane from Lemma \ref{lem:hyperplane}. Then there exists a finite family of simple cones $C^i$, $i\in I$ such that 
\begin{itemize}
\item[$(a)$] $\bigcup_{i\in I} C^i = C$,
\item[$(b)$] for every $i \in I$, if a face $F$ of $C^i$ satisfies $\textup{min} \{ x^\top Hx : x \in F \cap \mathcal{H} \} = 0$, then there exists an extreme ray $v$ of $F$ with $v^\top H v = 0$,
\item[$(c)$] for every $i \in I$, $C^i$ has \smallcc{\{H,C\}} and dimension of $C^i$ is equal to the dimension of $C$.
\end{itemize}
\end{lemma}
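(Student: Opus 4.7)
The plan is to construct the decomposition in a single shot rather than recursively: introduce new extreme rays at rational zeros of $q(x):=x^{\top}Hx$ supplied by Theorem~\ref{QPisNP}, and then take a triangulation of $C$ with the enlarged vertex set. This avoids the polynomial-in-polynomial complexity blow-up that a deeply recursive construction would incur.

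Write $C=\cone\{r^{1},\dots,r^{d}\}$ with $d=\dim C$, choosing the $r^{i}$ of complexity polynomial in that of $C$. Faces of $C$ are then exactly $F_S=\cone\{r^{i}:i\in S\}$ for $S\subseteq[d]$. Let $T=\{i:q(r^{i})=0\}$. Property~$(b)$ is equivalent to the single statement that $q>0$ on $F_S\setminus\{0\}$ for every $S\subseteq[d]\setminus T$, since any $i\in S\cap T$ trivially witnesses~$(b)$ otherwise. Call $S\subseteq[d]\setminus T$ \emph{minimal bad} if $q$ vanishes somewhere in $F_S\setminus\{0\}$ but on no strictly smaller subface; by the reduction, $(b)$ fails exactly when minimal bad sets exist. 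For each minimal bad set $S$ (at most $2^{d}$ in number), Theorem~\ref{QPisNP} applied to $\min\{q(x):x\in F_S\cap\mathcal{H}\}$ yields a rational optimum $x^{*}_S$ of complexity polynomial in $\{H,C\}$ with $q(x^{*}_S)=0$, and the minimality of $S$ forces $x^{*}_S\in\mathrm{relint}(F_S)$ (and $|S|\ge 2$).

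Put $V=\{r^{1},\dots,r^{d}\}\cup\{x^{*}_S:S\text{ minimal bad}\}$ and iteratively perform a stellar subdivision at each $x^{*}_S$, starting from the trivial triangulation $\{C\}$; this produces a triangulation of $C$ with vertex set exactly $V$, and we let $\{C^{i}\}_{i\in I}$ be its cones. Conditions $(a)$ and $(c)$ and the simple-cone requirement are then immediate: the $C^{i}$ cover $C$, are simple and $d$-dimensional, and each is generated by $d$ vectors of $V$, hence has complexity polynomial in $\{H,C\}$ via the standard vertex/facet complexity bound. For $(b)$, let $F$ be any face of some $C^{i}$. If any extreme ray of $F$ equals some $x^{*}_S$, then $q$ vanishes there and $(b)$ holds. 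Otherwise every extreme ray of $F$ is some $r^{j}$ with $j\in[d]\setminus T$, so $F=F_{S'}$ for some $S'\subseteq[d]\setminus T$; if $S'$ contained a minimal bad set $S$, then $x^{*}_S\in\mathrm{relint}(F_S)\subseteq F_{S'}$ would simultaneously lie in the relative interior of the face $F_{S'}$ of $C^{i}$ and be an extreme ray of some other sub-cone of the triangulation, contradicting the common-face axiom. Hence $S'$ contains no minimal bad set, which forces $q>0$ on $F_{S'}\setminus\{0\}$, making $(b)$ vacuous.

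The main obstacle I anticipate is the pair of polyhedral-topological claims just invoked: first, that the iterative stellar subdivision is well-defined and produces only simple sub-cones whose complexity remains polynomial in $\{H,C\}$; and second, that a vertex of a triangulation cannot lie in the relative interior of a face of a different sub-cone. The first follows because each stellar subdivision at the rational point $x^{*}_S\in\mathrm{relint}(F_S)$ replaces every cone having $F_S$ as a face with cones generated by subsets of the (polynomial-complexity) pool $V$, preserving simplicity and giving the complexity bound. The second is handled by comparing, for the intersection $C^{i}\cap C^{i'}$, the smallest face of $C^{i}$ and of $C^{i'}$ containing $x^{*}_S$ and deriving a dimension mismatch from $x^{*}_S$ being extreme in $C^{i'}$ but interior to a face of $C^{i}$ of dimension $\ge 2$.
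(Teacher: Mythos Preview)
Your argument is correct and takes a genuinely different route from the paper.

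The paper proceeds by induction on $\dim C$: it locates one global minimizer $\bar x$ of $q$ over $C\cap\mathcal{H}$ (which has $q(\bar x)=0$ in the nontrivial case), recursively decomposes each facet of $C$ not containing $\bar x$, and then cones each resulting piece with $\bar x$. Your construction instead identifies upfront all faces $F_S$ (with $S$ minimal bad) on which $q$ first vanishes, extracts one rational zero $x^*_S$ in the relative interior of each via Theorem~\ref{QPisNP}, and performs iterated stellar subdivisions at these points. Both proofs ultimately certify~$(c)$ the same way: every extreme ray of every $C^i$ comes from a fixed pool of vectors---the original $r^j$ together with minimizers of $q$ over faces of $C$---each of \smallcc{$\{H,C\}$}. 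The paper's recursion therefore does \emph{not} suffer the polynomial-in-polynomial blow-up you worried about, because it too observes that the added rays are minimizers over faces of the \emph{original} $C$; so your one-shot approach buys conceptual cleanliness rather than a genuine complexity saving. Conversely, the paper's inductive argument makes the verification of~$(b)$ slightly more transparent, since every subcone contains the witness $\bar x$ by construction.

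Two small points of exposition. First, where you write that $x^*_S$ ``lies in the relative interior of the face $F_{S'}$ of $C^i$'', you mean the relative interior of $F_S$ (which is a face of $F_{S'}$); the contradiction still goes through. Second, your common-face argument between $C^i$ and a different $C^{i'}$ is more than you need: since $x^*_S$ is a vertex of the simplicial fan and lies in $C^i$, it must already be an extreme ray of $C^i$ itself, which directly contradicts its lying in $\mathrm{relint}(F_S)\subseteq C^i$ with $|S|\ge 2$.
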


\begin{proof}
%Without loss of generality we may assume $C$ is full-dimensional.

The proof is by induction on the dimension $n$ of the cone. If the cone has dimension one, then the claim is trivially true.

Since $C \cap \mathcal{H}$ is a compact convex set, by Theorem \ref{QPisNP} there exists an optimal solution $\bar x$ of the problem $\textup{min} \{ x^\top Hx : x \in C \cap \mathcal{H}\}$ that has \smallcc{$\{H,C,\mathcal{H}\}$}.
As $\mathcal{H}$ has \smallcc{$C$}, $\bar x$ has \smallcc{$\{H,C\}$}.
If the minimum value is strictly positive, then the result is trivially true. So we now assume that the minimum value is zero. 

Let $F_i$, $i \in I$ be the facets of $C$ that do not contain $\bar x$.
By induction, for every $i \in I$, there exist finitely many simple cones (of dimension $n-1$, and with $n-1$ extreme rays) $C_i^j$, $j\in J(i)$ that satisfy (a) and (b) with respect to the $n-1$ dimensional cone $F_i$.
We show that the family of cones 
\begin{equation} \label{fam}
\{\cone\{C_i^j \cup \{\bar x\}\}: i \in I, \ j \in J(i)\}
\end{equation}
satisfies (a) and (b).
Since for every $i \in I$, the vector $\bar x$ is affinely independent from all the vectors in $F_i$, each element of \eqref{fam} is a simple cone. It is straightforward to verify that (a) holds. Condition (b) holds by induction for all the faces of $C_i^j, \ i \in I, \ j\in J(i)$, and it holds also for all the remaining faces of the cones in \eqref{fam} because they all contain $\bar x$ as an extreme ray.

The above proof of (a) and (b) can be seen as a constructive algorithm that recursively constructs the simple cones $C^i$, $i\in I$.
In order to show that (c) holds, we just need to prove that all the $n$ extreme rays of the cones constructed in such way have \smallcc{$\{H,C\}$}.
Note that such extreme rays are either extreme rays of $C$, in which case have \smallcc{$C$}, or optimal solutions of a problem $\textup{min} \{ x^\top Hx : x \in F \cap \mathcal{H} \}$, for a face $F$ of $C$, in which case have \smallcc{$\{H,C\}$}.  \hfill $\square$
\end{proof}

Now we are ready to present a proof of Theorem \ref{thm:npresult}.

\begin{proof}[Proof of Theorem~\ref{thm:npresult}]	
	
	Consider a feasible $\F = \F(H, c, d, A, b)$ and let $\qu(x) := x^T H x^T + c^T x + d$ denote the quadratic form. Without loss of generality, we assume that the polyhedron $\C := \{ x : Ax  \leq  b\}$ is pointed: If not, consider the partition of the feasible region problem into $2^n$ pieces as
\begin{eqnarray*}
x \in \F(H, c, d, A, b) \\
x_i \geq 0 \ i \in S \subseteq \{1, \dots, n\}\\
x_i \leq 0  \ i \in \{1, \dots, n\}\setminus S,
\end{eqnarray*}
for every $S \subseteq \{1, \dots, n\}$; note that the complexity of the additional constraints is $O(n)$ and therefore each part in this partition has \smallc, so we can restrict to a non-empty part. 
	
	Notice that an external description of $\rec(\C)$ can be obtained by an external description of $\C$ by replacing all the right-hand sides with a zero, hence $\rec(\C)$ has \smallc. Using our assumption that $\rec(\C)$ is pointed, let $\mathcal{H}: = \{x : f^{\top}x = 1\}$ be the normalizing hyperplane from Lemma \ref{lem:hyperplane} for $\rec(\C)$. Examine the optimization problem:
	\begin{align} \label{eq:pureQuad}
		\begin{split}
			\textup{min} ~~& r^\top H r\\
			\textup{s.t.} ~~& r \in \rec(\C) \cap \h
		\end{split}
	\end{align}
Since $\rec(\C) \cap \h$ is compact, there exists a global optimal value. We break up into two cases depending on the sign of the optimal value. 

\paragraph{Case 1: The optimum for \eqref{eq:pureQuad} is strictly negative.} We construct a feasible solution of $\F(H, c, d, A, b)$ as follows. Since $\rec(\C)$ and $H$ have \smallc, Theorem \ref{QPisNP} asserts that there is an optimal solution $r^*$ for \eqref{eq:pureQuad} which has \smallc. Then let $\tilde{r}$ be an \emph{integer} vector with \smallc obtained by scaling $r^*$. Also, by Proposition \ref{MIP in NP}, let $\tilde{x}$ be a point in the mixed-integer linear set $\C \cap (\Z^p \times \R^q)$ with \smallc.
%for example, take any $\bar{x} \in P \cap (\Z^p \times \R^q)$ and let $\tilde{x}$ be an optimal solution for $\max\{ 0 : x \in P, ~x_i = \bar{x}_i ~~\forall i \le p\}$ with \smallc, which exists from Theorem \ref{QPisNP}. 

	For every $\lambda \in \Z_+$, the point $\tilde{x} + \lambda \tilde{r}$ belongs to $\C \cap (\Z^p \times \R^q)$. Expanding the quadratic form (and giving names to the different terms):
	\begin{align}
		\qu(\tilde{x} + \lambda \tilde{r}) = \lambda^2 \tilde{r}^\top H \tilde{r} + \lambda \left(2 \tilde{x}^\top H \tilde{r} + c^\top \tilde{r} \right) + c^\top \tilde{x} + d := \lambda^2 v_1 + \lambda v_2 + v_3.
	\end{align}
	Since $v_1 < 0$, this is a strictly concave polynomial in $\lambda$, and so setting $\lambda$ larger than its larger root gives $\qu(\tilde{x} + \lambda \tilde{r}) < 0$. Explicitly, let $$\tilde{\lambda} = \max\left\{\bigg\lceil \frac{-v_2 - \sqrt{v_2^2 - 4 v_1 v_3}}{2 v_1} \bigg\rceil, 0 \right\}.$$ Then $\tilde{x} + \tilde{\lambda} \tilde{r}$ is feasible for \eqref{MIQPd}, and moreover it is easy to verify that it has \smallc. This concludes the proof of Case 1.

\paragraph{Case 2: The optimum for \eqref{eq:pureQuad} is non-negative.} Then let $\{P_i\}_i$ and $\{R_K\}_{K \in \K}$ be the decomposition of $\C \cap (\Z^p \times \R^q)$ from Proposition \ref{IPNP2}. By the guarantees of this decomposition, there is $\bar{i}$ and $\bar{K}$ such that $(P_{\bar{i}} + \intcone(R_{\bar{K}})) \cap \{x : \qu(x) \le 0\}$ is non-empty. Since $R_{\bar{K}}$ is simple and pointed (since we assume $\C$ pointed), we can use Lemma \ref{decomposition} to refine $R_{\bar{K}}$ into the family of cones $\{R_{\bar{K}, j}\}_j$; again due to its guarantees, let $\bar{j}$ be such that $(P_{\bar{i}} + \intcone(R_{\bar{K}, \bar{j}})) \cap \{x : \qu(x) \le 0\}$ is non-empty. We show that this set has a point of \smallc. To simplify the notation, let $P := P_{\bar{i}}$, and enumerate $R_{\bar{K}, \bar{j}} = \{r^j\}_j$.

	Now let $F := \cone(r^j)_j$. In addition, for an index $i$, we exclude ray $r^i$ to define the face $F_i := \cone(r^j)_{j \neq i}$, and similarly for a set of indices $J$, let $F_J := \cone(r^j)_{j \notin J}$. Finally, we define the $\intcone$ version of these cones, namely $F^I := \intcone(r^j)_j$, $F^I_i := \intcone(r^j)_{j \neq i}$ and $F^I_J := \intcone(r^j)_{j \notin J}$. So we are interested in the solutions to 
	\begin{align} \label{eq:lastPart}
		\begin{split}
		&\qu(x) \le 0\\
		& x \in P + F^I.
		\end{split}
	\end{align}

	We first analyze the behavior of $\qu$ over a single direction $r^j$. Any point in $P + F^I$ can be written as $x^i + \mu r^i$ for $x^i \in P + F^I_i$ and $\mu \in \Z_+$. Define $J := \{j \in \{1,\dots,n\} : (r^j)^\top H r^j = 0\}$. Then $\qu$ has linear behavior along the directions $r^i$ with $i \in J$: for all $x^i \in P + F^I_i$ and $\mu \in \Z_+$
	\begin{align} \label{eq:linearRj}
		\qu(x^i + \mu r^i) = \mu \cdot \left(2 (x^i)^\top H r^i + c^\top r^i\right) + (x^i)^\top H x^i + c^\top x^i + d ~~~~~~~~~\forall i \in J.
	\end{align}
	Hence, if there is $i \in J$ and a point $x^i \in P + F^I_i$ such that the first term $2 (x^i)^\top H r^i + c^\top r^i$ is negative, then we can find a large scaling $\mu$ such that the point $x^i + \mu r^i$ satisfies \eqref{eq:lastPart}; in fact, we can construct such a point in a way that it is has \smallc.
	
	\begin{claim} \emph{1} \label{claim one}
		Consider $i \in J$ and the linear optimization problem $\min\{2 (x^i)^\top H r^i + c^\top r^i : x^i \in P + F^I_i\}$. If the optimum of this problem is negative, then there is a point $\tilde{x}^i \in P + F^I_i$ which has \smallc and negative objective value.
	\end{claim}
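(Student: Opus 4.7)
My plan is to relax the integer cone $F^I_i = \intcone(r^j)_{j\neq i}$ to the continuous cone $F_i := \cone(r^j)_{j\neq i}$, reducing the claim to a standard linear program
\[
(\textup{LP}) \qquad \min\bigl\{\,2(x^i)^\top H r^i + c^\top r^i \;:\; x^i \in P + F_i\,\bigr\}.
\]
Since $P + F^I_i \subseteq P + F_i$ and the objective is linear, the negativity of the minimum over $P + F^I_i$ forces $(\textup{LP})$ to have strictly negative optimum as well. Moreover, $P$ is a polytope of small complexity (by Proposition~\ref{IPNP2}) and $F_i$ is pointed, being a face of the simple pointed cone $F$, with small-complexity generators (by Lemma~\ref{decomposition}); hence $P + F_i$ is a line-free rational polyhedron of small complexity. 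I would then split into two cases based on whether $(\textup{LP})$ is bounded below.

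In the unbounded case, standard LP theory provides an extreme ray of the recession cone $F_i$ along which the objective strictly decreases; this must be one of the generators $r^k$ with $k \neq i$ and $2(r^k)^\top H r^i < 0$. Taking any small-complexity vertex $v$ of $P$ and letting $\mu \in \mathbb{Z}_+$ be the smallest nonnegative integer making $2(v + \mu r^k)^\top H r^i + c^\top r^i < 0$, one obtains $\mu$ as the ceiling of an explicit rational expression in the small-complexity data $H, c, v, r^i, r^k$; hence $\mu$ has small complexity, and $\tilde{x}^i := v + \mu r^k \in P + F^I_i$ is the desired point.

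In the bounded case, $(\textup{LP})$ attains its minimum at a vertex of $P + F_i$. A short midpoint argument using pointedness of $F_i$ shows that every vertex of $P + F_i$ is in fact a vertex of $P$: if $v = p + k$ with $p \in P$, $k \in F_i$, $k \neq 0$, then $v = \tfrac{1}{2}(p) + \tfrac{1}{2}(p + 2k)$ exhibits $v$ as a nontrivial midpoint of two points of $P + F_i$, so $k$ must be $0$, and then any nontrivial decomposition of $v$ in $P$ would give one in $P + F_i$. Thus the optimum is attained at some vertex $\tilde v$ of $P$, which has small complexity (vertices of small-complexity polytopes are solutions to square rational subsystems), and $\tilde{x}^i := \tilde{v} \in P \subseteq P + F^I_i$ completes the proof. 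I do not expect a genuine obstacle here: the central idea is the continuous relaxation, and once in place the argument reduces to standard LP theory plus routine complexity bookkeeping that follows from Proposition~\ref{IPNP2} and Lemma~\ref{decomposition}.
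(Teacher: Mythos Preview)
Your proof is correct and follows essentially the same approach as the paper: either a vertex of $P$ already has negative objective, or some generator $r^k$ with $k\neq i$ strictly decreases the linear objective and a small integer multiple of it added to a vertex of $P$ does the job. The only difference is organizational: the paper minimizes the objective directly over $P$ first (bypassing your continuous relaxation and the vertex-of-$(P+F_i)$ argument) and, when that minimum is nonnegative, reads off the existence of the decreasing ray $r^j$ straight from the hypothesis on $P+F^I_i$.
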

	
	\begin{cpf}
		To simplify the notation, let $obj(x) = 2 x^\top H r^i + c^\top r^i$ denote the objective function. Let $\tilde{p} \in \argmin\{obj(p) : p \in P\}$. Since $P$ has \smallc, it follows that $\tilde p$ has \smallc. Clearly if $obj(\tilde{p}) < 0$, then set $\tilde{x}^i$ to $\tilde{p}$ as the desired point in $P + F^I_i$, concluding the proof. Otherwise, by linearlity of $obj$ and the definition of $F^I_i$ there exists some $j \neq i$ such that $2(r^j)^THr^i <0$. Then let $\tilde{\eta}_j$ be the smallest non-negative integer satisfying
		\begin{align*}
			&obj\bigg(\tilde{p} + r^j \tilde{\eta}_j\bigg) \le -1\\
			&\tilde{\eta}_j \in \Z_+ 
		\end{align*}
Clearly $\tilde{\eta}_j$ has \smallc and therefore $\tilde{x}^i = \tilde{p} + \tilde{\eta}_jr^j$ is the desired point in $P + F^I_i$, concluding the proof.
%
%
%		To simplify the notation, let $obj(x) = 2 x^\top H r^i + c^\top r^i$ denote the objective function. Suppose $\bar{x}^i \in P + F^I_i$ is such that $obj(\bar{x}^i) < 0$. We can write $\bar{x}^i = \bar{p} + \sum_{j \neq i} r^j \bar{\mu}_j$ for $\bar{p} \in P$ and $\bar{\mu}_j \in \Z_+$ for all $j \neq i$. By linearity of $obj$, letting $\tilde{p} \in \argmin\{obj(p) : p \in P\}$, we have that $v^* := obj(\tilde{p} + \sum_{j \neq i} r^j \mu_j)$ is at most $obj(\bar{x}^i)$ and hence negative. Moreover, since $P$ has \smallc, it follows that $\tilde p$ and $v^*$ also have \smallc.
%		
%		Then consider the set of solutions to the following integer linear programming (IP) problem over variables $\mu_j$'s:
%		%
%		\begin{align*}
%			&obj\bigg(\tilde{p} + \sum_{j \neq i} r^j \mu_j\bigg) \le v^*\\
%			&\mu_j \in \Z_+ ~~~~~~ \forall j \neq i.
%		\end{align*}
%		By construction this IP is feasible and has \smallc. Since linear IPs have solutions of small complexity (see Section \ref{sec:IPNP}), let $\{\tilde{\mu}_j\}_{j \neq i}$ be any such solution; then $\tilde{p} + \sum_{j \neq i} r^j \tilde{\mu}_j$ is the desired point in $P + F^I_i$, concluding the proof. 
	\end{cpf}

	Then suppose there is $i \in J$ such that $\min\{2 (x^i)^\top H r^i + c^\top r^i : x^i \in P + F^I_i\}$ is negative. From  Claim~\ref{claim one}, let $\tilde{x}^i \in P + F^I_i$ have \smallc such that $\tilde{v}:= 2 (\tilde{x}^i)^\top H r^i + c^\top r^i < 0$; notice that $\tilde{v}$ has \smallc. Given \eqref{eq:linearRj}, we set $\mu = \lceil \frac{(\tilde{x}^i)^\top H \tilde{x}^i + c^\top \tilde{x}^i + d}{|\tilde{v}|} \rceil$ to get that $\tilde{x}^i + \mu r^i$ is feasible for the problem \eqref{eq:lastPart} and has \smallc; this concludes the proof in this case.
	
	Finally, consider the case where for all $i \in J$ we have $\min\{2 (x^i)^\top H r^i + c^\top r^i : x^i \in P + F^I_i\}$ non-negative. In this case, problem \eqref{eq:lastPart} is feasible if and only if 
	\begin{align} \label{eq:lastPart2}
		\begin{split}
		&\qu(x) \le 0\\
		& x \in P + F^I_J
		\end{split}
	\end{align}
	is feasible.
	
	First we bound the norm of solutions to the above problem. 
	
	\begin{claim} \emph{2} \label{claim two}
		There is a rational number $v^*$ of \smallc such that for all $x$ satisfying \eqref{eq:lastPart2} we have $\|x\| \le v^*$.
	\end{claim}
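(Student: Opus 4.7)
The plan is to show that on the cone $F_J$ the quadratic form $r^\top H r$ is bounded below by a positive multiple of $\bigl(\sum_j \eta_j\bigr)^2$, so that $\qu\bigl(p + \sum_{j \notin J} \eta_j r^j\bigr) \le 0$ forces $\sum_j \eta_j$ to be bounded by a quantity of small complexity. A bound on $\|x\|$ then follows immediately by the triangle inequality.

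\textbf{Step 1: Positivity on $F_J$.} I would first establish that $\alpha := \min\{r^\top H r : r \in F_J \cap \h\}$ is strictly positive and of small complexity. Since the cone $F = \cone(r^j)_j$ produced by Lemma~\ref{decomposition} is simple, $F_J$ is a face of $F$ (obtained by zeroing the coefficients indexed by $J$). Property (b) of Lemma~\ref{decomposition}, applied to this face, asserts that $\alpha = 0$ would force some extreme ray $r^j$ of $F_J$ (necessarily with $j \notin J$) to satisfy $(r^j)^\top H r^j = 0$, contradicting the definition $J = \{j : (r^j)^\top H r^j = 0\}$. The small complexity of $\alpha$ comes from Theorem~\ref{QPisNP} applied to the defining QP.

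\textbf{Step 2: Quadratic lower bound in the coefficients.} By homogeneity, $r^\top H r \ge \alpha (f^\top r)^2$ for every $r \in F_J$. Each refined ray $r^j$ with $j \notin J$ lies in $\rec(\C)$, so $f^\top r^j > 0$ by Lemma~\ref{lem:hyperplane}; setting $\gamma := \min_{j \notin J} f^\top r^j > 0$ yields a number of small complexity. Then for any $r = \sum_{j \notin J} \eta_j r^j$ with $\eta_j \in \Z_+$,
\[
 r^\top H r \;\ge\; \alpha \gamma^2 \Bigl(\sum_{j \notin J} \eta_j\Bigr)^2.
\]

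\textbf{Step 3: Bounding the remaining terms.} For $x = p + r$ with $p \in P$, expand $\qu(x) = r^\top H r + (2 p^\top H r + c^\top r) + (p^\top H p + c^\top p + d)$. The cross-term satisfies $|2 p^\top H r + c^\top r| \le M \sum_{j \notin J} \eta_j$, where $M := \max_{j \notin J,\, p \in P} |2 p^\top H r^j + c^\top r^j|$ is the optimum of linear programs over the small-complexity polytope $P$ and hence has small complexity. The remaining term $|p^\top H p + c^\top p + d|$ is bounded over $p \in P$ by some $K$ of small complexity via Theorem~\ref{QPisNP}. Combining with Step 2, the inequality $\qu(x) \le 0$ becomes
\[
 \alpha \gamma^2 \Bigl(\sum_{j \notin J} \eta_j\Bigr)^2 - M \sum_{j \notin J} \eta_j - K \;\le\; 0,
\]
which bounds $\sum_{j \notin J} \eta_j$ by an explicit value $N$ of small complexity (the larger root of the quadratic). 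Since $\|x\| \le \|p\| + N \max_j \|r^j\|$ and all ingredients have small complexity, taking $v^*$ to be this bound proves the claim.

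The main obstacle is Step 1: one must verify carefully that $F_J$ qualifies as a face to which Lemma~\ref{decomposition}(b) applies, which is where the simplicity of $F$ (guaranteed by Lemma~\ref{decomposition}) is used crucially. The rest is routine complexity bookkeeping, ensuring that each of $\alpha, \gamma, M, K$ and the resulting root $N$ is polynomial in the input complexity.
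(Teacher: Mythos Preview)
Your proposal is correct and follows essentially the same approach as the paper: both arguments use property~(b) of Lemma~\ref{decomposition} (together with the simplicity of $F$, so that $F_J$ is a face) to conclude that $\alpha = \min\{r^\top H r : r \in F_J \cap \h\} > 0$, and then derive a one-variable quadratic inequality with positive leading coefficient whose larger root bounds the size of the recession part. The only cosmetic difference is that the paper parametrizes by the scalar $\lambda$ with $\bar r = \lambda \tilde r$, $\tilde r \in \h$, and defines $v_2^*, v_3^*$ as minima over $P$ and $F_J \cap \h$, whereas you parametrize by $\sum_{j\notin J}\eta_j$ and bound the cross and constant terms in absolute value; both routes yield a \smallc bound on $\|x\|$.
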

	
	\begin{cpf}
	Let $\h = \{x : f^\top x = 1\}$ be the normalizing hyperplane given by Lemma \ref{lem:hyperplane} for the cone $F$. Now consider any point of the form $\bar{p} + \bar{r} \in P + F^I_J$ (with $\bar{p} \in P$ and $\bar{r} \in F^I_J)$ such that $\qu(\bar{p} + \bar{r}) \le 0$; also consider the vector in direction $\bar{r}$ that belongs to $\h$, namely let $\bar{r} = \lambda \tilde{r}$ for $\tilde{r} \in F_J \cap \h$ and $\lambda > 0$. We upper bound the norm of $\bar{p} + \bar{r}$, starting by bounding $\lambda$.
	
	Since $F$ satisfies the conclusion of Lemma \ref{decomposition}, and given the definition of $J$, we have that $r^\top H r > 0$ for all $r \in F_J$. Let $v^*_1 = \min \{ r^\top H r : r \in F_J \cap \h\}$ (notice that $F_J \cap \h$ is compact). Since $F_J, \h$ and $H$ have \smallc, it follows from Theorem \ref{QPisNP} that
%	$r^*$ and $v^*$ also have \smallc.
$v^*_1$ also has \smallc. 
%\mmnote{Check interface with theorem} 
Evaluating $\qu$ over $\bar{p} + \bar{r}$ we have
	\begin{align} \label{eq: last}
		\qu(\bar{p} + \bar{r}) = \lambda^2 \tilde{r}^\top H \tilde{r} + \lambda \left( 2 \tilde{r}^\top H \bar{p} + c^\top \tilde{r} \right) + \left( (\bar{p})^\top H \bar{p} + c^\top \bar{p} + d \right).
	\end{align}
	Let $v^*_2 := \min \{2 r^\top H p + c^\top r : p \in P, r \in F_J \cap \h\}$ and $v^*_3 := \min\{ p^\top H p + c^\top p + d : p \in P\}$, so that $\qu(\bar{p} + \bar{r}) \ge  \lambda^2 v^*_1 + \lambda v^*_2 + v^*_3$.
	Since $v^*_1 > 0$, the polynomial $\lambda^2 v^*_1 + \lambda v^*_2 + v^*_3$ is strictly convex (as a function of $\lambda$), and since $\qu(\bar{p} + \bar{r}) \le 0$, we have that $\lambda$ cannot be larger than the largest of its roots; explicitly, $\lambda \le \bigg\lceil \frac{-v^*_2 \pm \sqrt{(v_2^*)^2 - 4 v^*_1 v^*_3}}{2 v^*_1} \bigg\rceil$. Moreover, this bound is independent of our choice of point $\bar{p} + \bar{r}$ and is a \smallc value. 
	
	We can finally bound the norm of $\bar{p} + \bar{r}$. By triangle inequality, $\|\bar{p} + \bar{r}\| \le \|\bar{p}\| + \lambda \|\tilde{r}\|$.
	%Letting $v^*_4$ be the ceiling of $\max_i \max \{(e^i)^\top p : p \in P\}$, we can bound the first term $\|\bar{p}\| \le \sqrt{n} \cdot v^*_4$.
	Let $v^*_4$ be the ceiling of $||P||_\infty$. $v^*_4$ has \smallc, because it can be obtained as the ceiling of $\max_i \{ |\max \{(e^i)^\top p : p \in P\}|, |\min \{(e^i)^\top p : p \in P\}|\}$.
	Therefore we can bound $\|\bar{p}\| \le \sqrt{n} \cdot v^*_4$.
	%Letting $v^*_4$ be the ceiling of $||P||_\infty$, we can bound the first term $\|\bar{p}\| \le \sqrt{n} \cdot v^*_4$.
	Also, since $\tilde{r} \in \h$, and by the definition of $\h$ (see Lemma \ref{lem:hyperplane}), we have $f^\top \tilde{r} = 1$ and $f^\top \frac{\tilde{r}}{\|\tilde{r}\|} \ge \frac{1}{\max_j \|r^j\|}$, which imply $\|\tilde{r}\| \le \max_j \|r^j\|$. Together, these bounds give an upper bound for $\|\bar{p} + \bar{r}\|$ by a \smallc value which is independent of $\bar{p} + \bar{r}$; this concludes the proof.
	\end{cpf}
	
	Now we show that if \eqref{eq:lastPart2} has a feasible solution, then it has one of \smallc. Let $\bar{x}$ be a solution for  \eqref{eq:lastPart2} and recall that $\bar{x} \in \Z^p \times \R^q$. By the bound of Claim~\ref{claim two}, and using integrality, we have that the first $p$ components of $\bar{x}$ have \smallc. Then we fix these value and consider the optimization over the other components $\min\{ \qu(x) : x \in P + F^I_J, ~x_i = \bar{x}_i ~\forall i \le p\}$. Claim 2
%~\ref{claim two} 
implies that this optimization problem has a global optimal solution and therefore from Theorem \ref{QPisNP}, we know that this optimization problem has an optimal solution $\tilde{x}$ that has \smallc, and by definition $\tilde{x} \in P + F^I_J$ and $\qu(\tilde{x}) \le \qu(\bar{x}) \le 0$, and hence $\tilde{x}$ is the desired feasible solution for \eqref{eq:lastPart2}. This concludes the proof.   \hfill $\square$
\end{proof}

%##################################################################
%##################################################################
%##################################################################
%##################################################################

\ifmp
	\bibliographystyle{spmpsci}  % mathematics and physical sciences
\else
	\bibliographystyle{amsplain}     
\fi

\bibliography{ip}

\end{document}